\newtheorem{definition}{Definition}
\newtheorem{proposition}{Proposition}
\newtheorem{remark}{Remark}
\definecolor{LightCyan}{rgb}{0.85,1,1}
\newcommand{\totalpower}[1]{%
	\overline{P}_{#1}
}
\begin{document}

\title{Fronthaul-Aware Software-Defined Joint Resource Allocation and User Scheduling  for 5G Networks}
\author{      \IEEEauthorblockN{Chen-Feng Liu, Sumudu Samarakoon, and Mehdi Bennis}
\IEEEauthorblockA{ Centre for Wireless Communications, University of Oulu,  Finland\\E-mail: \{cliu, sumudu, bennis\}@ee.oulu.fi}}

\maketitle
\begin{abstract}


Software-defined networking (SDN) is the concept of decoupling the control and data planes to create a flexible and agile network, assisted by a central controller. However, the performance of SDN highly depends on the limitations in the fronthaul which are inadequately discussed in the existing literature. In this paper, a fronthaul-aware software-defined resource allocation mechanism is proposed for 5G wireless networks with  in-band wireless fronthaul constraints. Considering the fronthaul capacity, the controller maximizes the time-averaged network throughput by enforcing a coarse correlated equilibrium (CCE) and incentivizing base stations (BSs) to locally optimize their decisions to ensure mobile users' (MUs) quality-of-service (QoS) requirements. By marrying tools from Lyapunov stochastic optimization and game theory, we propose a two-timescale approach where the controller gives recommendations, i.e., sub-carriers with low interference, in a long-timescale whereas BSs schedule their own MUs and allocate the available resources in every time slot. Numerical results show  considerable throughput enhancements and delay reductions over a non-SDN network baseline.

\end{abstract}


\section{Introduction}\label{Sec: Introduction}

To satisfy the ever-increasing traffic demands in wireless networks, small cell base stations (BSs) are deployed to boost network capacity. However, due to frequency reuse in adjacent cells, signals from neighboring BSs result in severe co-channel interference. Although coupled BSs can coordinate their transmissions by exchanging information \cite{Chen:14:Mag}, it incurs high overhead in terms of latency, power cost, and allocated bandwidth.
In order to reduce the overhead of coordination, the concept of software-defined networking (SDN) which decouples the control and data plane of networks has been considered \cite{2015:SOF,Gudipati:13:SDN,Le2015}. Under current wireless SDN architectures,  a central controller, having the global view of the network at the top of the hierarchy, makes control decisions and issues recommendations to BSs at the lower-level of the hierarchy.
Thus, BSs are allowed to locally optimize their transmissions following the controller's recommendations without further coordination among neighboring BSs.
In \cite{Fu:2013:NVF}, Fu {\it et al.}~consider a scenario in which wireless service providers (WSPs) bid for resources via a central controller on behalf of the subscribed users. The resource management is analyzed through an auction process. In  \cite{Xianfu:SDN},  a central  controller schedules users based on the WSP's value function which relies on other competitors' private information. Additionally, the flow programmability of SDN under dense mobile environments is studied in \cite{Kozat:15:Flow}.
However, the above studies do not address the overhead induced  by using a capacity-limited shared fronthaul which negatively impacts the network performance.
Hence, the impact of fronthaul capacity and latency on the network performance cannot be ignored, especially in highly-centralized SDN architectures.

The main contribution of this paper is to propose a fronthaul-aware SDN architecture for a network consisting of locally coupled small cell BSs. These BSs compete over limited resources to maximize their own throughputs while ensuring mobile users' (MUs) quality-of-service (QoS) requirements.
 Due to the time-varying environment and mutual interference coupling among BSs, the throughput maximization problem is modeled as a {\it stochastic game}. In the SDN architecture, the central controller optimizes the weighted sum-log utility subject to a game theoretic equilibrium known as coarse correlated equilibrium (CCE), and provides recommendations for BSs via an in-band wireless fronthaul.
 The reliability of the fronthaul is measured in terms of the signal-to-noise ratio (SNR) between the BS and the controller
%
while the cost of fronthaul is modeled in terms of a time penalty as a function of the number of BSs and MUs. To reduce the overhead in the fronthaul, we propose a two-timescale decomposition: optimization at the central controller in a long timescale and decision making at BSs during a short timescale.
Using tools from Lyapunov optimization, a low-complexity  resource allocation and MU scheduling scheme is employed at each BS.
In the numerical results, we evaluate the performance of the proposed SDN-aware resource allocation solution by comparing it to a non-SDN architecture. Furthermore, the impact of fronthaul reliability on the performance  is studied with the aid of the simulations.


\section{System Model}\label{Sec: System model}

We consider a downlink (DL) wireless network consisting of a set of locally coupled small cell BSs $\mathcal{B}$ with frequency reuse factor 1. In this network, BS $b$ serves a set of MUs $\mathcal{M}_b$, and each MU is associated with one BS only.
 A central controller coordinates these locally coupled BSs and communicates with BSs via an in-band wireless fronthaul. We assume that all network entities are equipped with single antenna and share a set of sub-carriers $\mathcal{S}$ in the DL.
Additionally,  the considered network operates in slotted time indexed by $t\in\mathbb{Z}^{+}$, and  each time slot is unit time for simplicity. In this regard, we refer to $T_0$ slots as one time frame $\mathcal{F}$ and index the time frame by $a\in\mathbb{Z}^{+}$ in which $\mathcal{F}(a)=[(a-1)T_0+1,\cdots, aT_0]$.
Furthermore, $h_{ij}^{(s)}\in\mathcal{H}_{ij}^{(s)}$ denotes the channel gain (including path loss and channel fading) from transmitter  $i$ to receiver $j$ over sub-carrier $s$ where $\mathcal{H}_{ij}^{(s)}$ is a finite set.
Note that all channels are independent and experience block fading over time slots.

At the beginning of frame $a$, BS $b$ uploads its local information to the controller via the fronthaul by equally allocating the power budget $|\mathcal{S}|\totalpower{b}$ over all sub-carriers. For a given target  fronthaul rate $R^{\mathrm{U}}$,  the required time $\tau_{b}^{\rm U}(a)$ to upload the information at BS $b\in\mathcal{B}$ satisfies,
\begin{equation}\label{Eq: FH time cost}
\textstyle R^{\mathrm{U}}=  \sum_{s\in\mathcal{S}} \tau_{b}^{\rm U}(a)
\log_2\Big(1+\frac{\totalpower{b}h^{(s)}_{b{\rm C}}(a)}{N_0+\sum_{b'\in\mathcal{B}\setminus b}\totalpower{b'}h^{(s)}_{b'{\rm C}}(a)}\Big),
\end{equation}
where subscript ${\rm C}$ refers to the controller, and $N_0$ is the noise variance.
To gather information from all BSs, the controller requires $\max_{b\in\mathcal{B}}\{\tau_{b}^{\rm U}(a)\}$ time slots.
After manipulating the acquired local information, the controller sends back recommendations to every BS $b$ with a target rate  $R^{\rm D}$. Thus, the time required for feedback $\tau_{b}^{\rm D}(a)$ for all BSs $b\in\mathcal{B}$ satisfies,
\begin{equation}
\textstyle R^{\rm D}=\sum_{s\in\mathcal{S}} \tau_{b}^{\rm D}(a)
\log_2\Big(1+\frac{\totalpower  {\rm C} h^{(s)}_{{\rm C}b}(a)}{|\mathcal{B}|N_0+(|\mathcal{B}|-1)\totalpower {\rm C}h^{(s)}_{{\rm C}b}(a)}\Big).\label{Eq: FH download cost}
\end{equation}
In \eqref{Eq: FH download cost}, the controller equally allocates the available power $|\mathcal{S}|\totalpower {\rm C}$ to all BSs over all sub-carriers.
Finally, aided by the controller's suggestions, BS $b$ serves its associated MUs in the remaining slots of frame $a$. It is assumed that  sub-carriers are shared in both  DL and fronthaul. To avoid  interference between the DL and fronthaul, BSs wait for  $\tau(a)=\max_{b\in\mathcal{B}}\{\tau_{b}^{\rm U}(a)\}+\max_{b\in\mathcal{B}}\{\tau_{b}^{\rm D}(a)\}$ time slots, i.e., round trip time for the controller's recommendations. Thus, the overhead of the fronthaul is $\tau(a)$ which belongs to the finite set $\mathcal{T}$, and  $T_0-\tau(a)$ time slots are left for DL transmission within frame $a$.
It is assumed that the knowledge of $\tau$ and the perfect channel information of the signaling links are available at both the central controller and BSs. Moreover, BSs have full information of the direct links to the associated MUs.

In the DL transmission, BS $b$ serves MU $m$ with power $P^{(s)}_{bm}$ over sub-carrier $s$ in which $P^{(s)}_{bm}\in\mathcal{L}=\{0,\totalpower{b},\cdots,|\mathcal{S}|\totalpower{b}\}$ and  $\sum_{m\in\mathcal{M}_b,s\in\mathcal{S}} P_{bm}^{(s)}\leq |\mathcal{S}|\totalpower{b}$. Each sub-carrier is used by the BS to serve one MU, but can be reused in other cells resulting in inter-cell interference. Considering the available time in the DL, i.e., $T_0-\tau(a)$, the effective DL rate of MU $m$ served by BS $b$ over sub-carrier $s$ and time period $t\in\mathcal{F}(a)$ is given by,
\begin{multline}
\hspace{-0.9em}\textstyle R_{bm}^{(s)}(t)  =  \frac{T_0-\tau(a)}{T_0}
 \log_2\bigg(1  + \frac{P_{bm}^{(s)}(t)h_{bm}^{(s)}(t)}{N_0+\sum\limits_{b'\in\mathcal{B}\setminus b \atop m'\in\mathcal{M}_{b'}}P_{b'm'}^{(s)} (t) h_{b'm}^{(s)}(t)} \bigg).\label{Eq: DL instantaneous rate}\!\!
\end{multline}
It can be noted that $R_{bm}^{(s)}$ is bounded above by a maximal rate $R_{\rm max}$ due to the fact that the channel gains and transmit power are upper bounded. Moreover, each BS has queue buffers to store the traffic arrivals for MUs from the core network.  Let $Q_{bm}(t)$ denote the data queue at the beginning of slot $t$  for MU $m\in\mathcal{M}_b$ which  evolves as follows:
\begin{equation}
\textstyle Q_{bm}(t+1)=\max\Big\{Q_{bm}(t) - \sum\limits_{s\in\mathcal{S}}R_{bm}^{(s)}(t),0\Big\}+A_{bm}(t).\label{Eq: Queue-Q}
\end{equation}
Here, $A_{bm}(t)$ is the data arrival for MU $m\in\mathcal{M}_b$ during time slot $t$ which is independent and identically distributed over time with the mean $\lambda_{bm}>0$ and bounded above by a finite value $A_{{\rm max}}$, i.e., $0\leq A_{bm}(t)\leq A_{{\rm max}}$.
Since it is infeasible to store data indefinitely, BSs need to ensure that queue buffers are \emph{mean rate stable}, i.e., {$\lim\limits_{t\to\infty}{\mathbb{E}\left[|Q_{bm}(t)|\right]}/{t}= 0$} for all $b\in\mathcal{B},m\in\mathcal{M}_b$.
%
%
%
%
%
%
%
Note that satisfying queue stability is equivalent to ensuring that the average serving rate is larger than the mean arrival \cite{Neely/Stochastic}. Therefore, the stability condition becomes,
%
%
%
$\sum_{s\in\mathcal{S}}\bar{R}^{(s)}_{bm}=\lim\limits_{T\to\infty}\sum_{t=1}^{T}\sum_{s\in\mathcal{S}}\mathbb{E}\big[R^{(s)}_{bm}(t)\big]/T\geq \lambda_{bm}$.
%
%
%
Thus, as BSs compete one another over the limited resources to maximize the average DL rate, they need to satisfy,
\begin{align}
\textstyle \sum\limits_{m\in\mathcal{M}_b,s\in\mathcal{S}}\bar{R}^{(s)}_{bm}\geq \lambda_b= \sum\limits_{m\in\mathcal{M}_b}\lambda_{bm},~\forall\,b\in\mathcal{B}. \label{Eq: BS QoS}
\end{align}

\section{Stochastic Game among BSs}\label{Sec: Stochastic Game}

The competition among BSs over resources to maximize their utilities (in terms of rates) under queue and channel uncertainties  is modeled as a stochastic game
 $\mathcal{G}\coloneqq\big(\mathcal{B},\mathcal{W},\mathcal{P},\{u_{b}\}_{b\in\mathcal{B}}\big)$ among $|\mathcal{B}|$ players/BSs.
Here, $\mathcal{W}$ and $\mathcal{P}$ are the state and action spaces of all players,
while $u_{b}:\mathcal{W}\times\mathcal{P}\to\mathbb{R}^{+}\cup \{0\}$ denotes player $b$'s utility. In each time slot $t$, BS $b$ observes a random state realization $\boldsymbol{\omega}_b(t)\coloneqq (\tau(a),h_{bm}^{(s)}(t),~m\in\mathcal{M}_b,s\in\mathcal{S})$ from the state space $\mathcal{W}_b$.  Then, BS $b$ chooses a transmit power vector, {$\mathbf{P}_b(t)\coloneqq\big (  P^{(s)}_{bm}(t),m\in\mathcal{M}_b,s\in\mathcal{S}  \big ) \in\mathcal{P}_b$} from its action space {$\mathcal{P}_b\coloneqq\big\{\mathbf{P}_{b}\big|\,P_{bm}^{(s)}\in\mathcal{L},\sum_{m\in\mathcal{M}_b}\mathbb{I}\big\{ P_{bm}^{(s)}>0\big\}\leq 1,\sum_{m\in\mathcal{M}_b,s\in\mathcal{S}} P_{bm}^{(s)}\leq |\mathcal{S}|\totalpower{b}\big\}$} to serve its MUs.
Here, $\mathbb{I}\{\cdot\}$ is the indicator function.
%
We define BS $b$'s utility $u_{b}(\boldsymbol{\omega}(t),\mathbf{P}(t))$ as the expected DL rate with respect to the interference channels at the associated MUs which is given by,
\begin{equation}\label{Eq: BS's utility}
\textstyle u_{b}(\boldsymbol{\omega}(t),\mathbf{P}(t))\coloneqq\sum\limits_{m\in\mathcal{M}_b,s\in\mathcal{S}}\mathbb{E}_{\boldsymbol{\iota}_b}\big[R_{bm}^{(s)}(t)\big],
\end{equation}
where $\boldsymbol{\iota}_b\coloneqq\{h_{b'm}^{(s)},b'\in\mathcal{B}\backslash b,m\in\mathcal{M}_b,s\in\mathcal{S}\}$ is the set of interference channel gains of $\mathcal{M}_b$. Additionally, $\boldsymbol{\omega}\coloneqq[\boldsymbol{\omega}_1,\cdots,\boldsymbol{\omega}_{|\mathcal{B}|}]\in\mathcal{W}$ is the global random state with $\mathcal{W}\coloneqq\mathcal{W}_1\times\cdots\times\mathcal{W}_{|\mathcal{B}|}$, and $\mathbf{P}\coloneqq[\mathbf{P}_1,\cdots,\mathbf{P}_{|\mathcal{B}|}]\in\mathcal{P}$ is the global control action with $\mathcal{P}\coloneqq\mathcal{P}_1\times\cdots\times\mathcal{P}_{|\mathcal{B}|}$.
%
%
%
Furthermore, $\Pr(\mathbf{P}(t)|  \boldsymbol{\omega}(t))$ is the mixed strategy of the stochastic game, and we focus on the \emph{mixed stationary and Markovian} strategy $\Pr(\mathbf{P}|\boldsymbol{\omega})$ in this work.
Thus, the long-term time average expected utility of BS $b$ is given by,
\begin{equation}\label{Eq: BS's stationary utility}
\textstyle\bar{u}_{b}=\sum\limits_{\boldsymbol{\omega}\in\mathcal{W},\mathbf{P}\in\mathcal{P}} \Pr(\boldsymbol{\omega})\Pr(\mathbf{P}|  \boldsymbol{\omega})u_b(\boldsymbol{\omega},\mathbf{P}).
\end{equation}
For the purpose of allocating resources among BSs, the controller acts as a game coordinator. According to BSs' requirements \eqref{Eq: BS QoS}, the controller provides suggestions for BSs.
In order to incentivize BSs to follow the controller's suggestions, the controller finds the strategy satisfying the CCE constraints.  In this regard, we consider a more general case, i.e., the $\epsilon$-coarse correlated equilibrium ($\epsilon$-CCE) \cite{jnl:perlaza13}.
\begin{definition}\label{Def: epsilon-CCE}
The probability distribution $\Pr(\mathbf{P}|\boldsymbol{\omega})$  is an $\epsilon$-CCE if  it satisfies, for some $\epsilon\geq 0$,
\begin{align}
\hspace{-0.6em}\bar{u}_b(\tilde{\boldsymbol{\omega}}_b,\tilde{\mathbf{P}}_b)&\leq \Pr(\tilde{\boldsymbol{\omega}}_b)\theta_b( \tilde{\boldsymbol{\omega}}_b) ,\forall\,b\in\mathcal{B},\tilde{\boldsymbol{\omega}}_b\in\mathcal{W}_b, \tilde{\mathbf{P}}_b\in\mathcal{P}_b ,\label{Eq: CCE-1} \\
%
\textstyle\bar{u}_b&\geq
 \textstyle\sum_{ \boldsymbol{\omega}_b\in\mathcal{W}_b} \Pr( \boldsymbol{\omega}_b)\theta_b( \boldsymbol{\omega}_b)-\epsilon,\forall\,b\in\mathcal{B},\label{Eq: CCE-2}
\end{align}
where $\bar{u}_b(\tilde{\boldsymbol{\omega}}_b,\tilde{\mathbf{P}}_b)=\sum_{\boldsymbol{\omega}\in\mathcal{W},\mathbf{P}\in\mathcal{P}|\boldsymbol{\omega}_b=\tilde{\boldsymbol{\omega}}_b}  \Pr(\boldsymbol{\omega})\Pr(\mathbf{P}|  \boldsymbol{\omega})\allowbreak\times u_b(\boldsymbol{\omega},\tilde{\mathbf{P}}_b, \mathbf{P}_{-b})$ is the conditional expected utility on the state $\tilde{\boldsymbol{\omega}}_b$.
Here, $\theta_{b}( \boldsymbol{\omega}_{b})$ is the maximal utility  BS $b$ can achieve at  state $\boldsymbol{\omega}_{b}$ when deviating from the strategy $\Pr(\mathbf{P}|  \boldsymbol{\omega})$.
\end{definition}
%
%
%
%
%
%
%
%
%

\section{Controller-assisted  Resource Allocation and User Scheduling}\label{Sec: Two-timescale control}

To ensure an $\epsilon$-CCE for the stochastic game among BSs, the controller solves the network utility maximization problem which is given by, 
%
%
$$\begin{array}{ccl}
{\mbox{\bf OP:}}&\underset{\Pr(\mathbf{P}|\boldsymbol{\omega}),\theta_b( \boldsymbol{\omega}_b)}{\text{maximize}} &\phi(\bar{u}_{1},\cdots,\bar{u}_{|\mathcal{B}|})
%
\\ &\mbox{subject to}&\mbox{QoS requirement \eqref{Eq: BS QoS}},
%
\\&&\mbox{$\epsilon$-CCE constraints \eqref{Eq: CCE-1} and  \eqref{Eq: CCE-2}},
%
\\&&\textstyle\sum_{\mathbf{P}\in\mathcal{P}} \Pr(\mathbf{P}|  \boldsymbol{\omega})=1,~\forall\,\boldsymbol{\omega}\in\mathcal{W},
\\&&\textstyle\Pr(\mathbf{P}|  \boldsymbol{\omega})\geq 0,~\forall\,\boldsymbol{\omega}\in\mathcal{W}, \mathbf{P}\in\mathcal{P},
\end{array}$$
%
%
%
where
%
%
$\phi(\bar{u}_{1},\cdots,\bar{u}_{|\mathcal{B}|})=\sum_{b\in\mathcal{B}}\lambda_b\ln\big(1+\bar{u}_b\big)$
%
%
is the network utility. In {\bf OP}, $u_b$ and $R_{bm}^{(s)}$ depend on the statistics of $\boldsymbol{\iota}_b$, i.e., distributions of all interference channels, as per \eqref{Eq: DL instantaneous rate}, \eqref{Eq: BS's utility}, and \eqref{Eq: BS's stationary utility}. Since the MU has only the knowledge of the aggregate interference, estimating the distribution for each individual interference channel is  impractical.
To cope with this, an auxiliary utility $v_b(\boldsymbol{\omega},\mathbf{P}) $ for BS $b$ is introduced as follows:
\begin{multline}
\textstyle v_{b}(\boldsymbol{\omega}(t),\mathbf{P}(t))\coloneqq \sum\limits_{m\in\mathcal{M}_b,s\in\mathcal{S}}\frac{T_0-\tau(a)}{T_0} \\
\textstyle \times\log_2\bigg(1+\frac{P_{bm}^{(s)}(t)h_{bm}^{(s)}(t)}{N_0+\sum\limits_{b'\in\mathcal{B}\backslash b,m'\in\mathcal{M}_{b'}}P_{b'm'}^{(s)} (t)[\mathcal{H}_{b'm}^{(s)}]_{\rm max}} \bigg),\label{Eq: Remodeled utility}
\end{multline}
where $[\mathcal{H}_{b'm}^{(s)}]_ {\rm max}$ is the maximal element in $\mathcal{H}_{b'm}^{(s)}$.
\begin{proposition}\label{Prop: mean rate stability}
Let $\bar{v}_{b}$ be the average auxiliary utility of BS $b$. Then, the mean rate stability in  \eqref{Eq: BS QoS} is assured if
$\bar{v}_b\geq \lambda_{b},\forall\,b\in\mathcal{B}$,
 is held.
%
%
%
%
\end{proposition}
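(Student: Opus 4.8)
The plan is to exploit the fact that the auxiliary utility $v_b$ is built by replacing every random interference gain $h_{b'm}^{(s)}$ with its largest admissible value $[\mathcal{H}_{b'm}^{(s)}]_{\rm max}$, so that $v_b$ is a \emph{pessimistic} estimate of the true throughput. The argument proceeds in three stages: establish a pointwise inequality $v_b\le u_b$, pass to long-term averages, and invoke the equivalence between the QoS requirement \eqref{Eq: BS QoS} and mean rate stability already recorded in the excerpt.

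First I would fix a frame $a$, a slot $t\in\mathcal{F}(a)$, a global state $\boldsymbol{\omega}$, and an action $\mathbf{P}$, and compare the summands of $v_b$ in \eqref{Eq: Remodeled utility} with those of $R_{bm}^{(s)}$ in \eqref{Eq: DL instantaneous rate} term by term. Since $[\mathcal{H}_{b'm}^{(s)}]_{\rm max}$ is by definition the maximal element of the finite set $\mathcal{H}_{b'm}^{(s)}$, we have $h_{b'm}^{(s)}(t)\le[\mathcal{H}_{b'm}^{(s)}]_{\rm max}$ for every realization of the interference channels. Hence the denominator inside each $\log_2$ term of $v_b$ is at least as large as the corresponding denominator in $R_{bm}^{(s)}$, while the numerators and the prefactor $\tfrac{T_0-\tau(a)}{T_0}$ coincide. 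Monotonicity of $\log_2(1+x)$ then gives the term-wise bound, and summing over $m\in\mathcal{M}_b$, $s\in\mathcal{S}$ yields
\[
v_b(\boldsymbol{\omega},\mathbf{P})\;\le\;\sum_{m\in\mathcal{M}_b,\,s\in\mathcal{S}}R_{bm}^{(s)}(t).
\]

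Second I would take expectations. The right-hand side above is random only through the interference gains $\boldsymbol{\iota}_b$, whereas $v_b$ does not depend on $\boldsymbol{\iota}_b$; applying $\mathbb{E}_{\boldsymbol{\iota}_b}[\cdot]$ and using the definition \eqref{Eq: BS's utility} gives $v_b(\boldsymbol{\omega},\mathbf{P})\le u_b(\boldsymbol{\omega},\mathbf{P})$ for every $(\boldsymbol{\omega},\mathbf{P})$. Averaging this inequality against the stationary distribution $\Pr(\boldsymbol{\omega})\Pr(\mathbf{P}\mid\boldsymbol{\omega})$ as in \eqref{Eq: BS's stationary utility} preserves the ordering, so $\bar{v}_b\le\bar{u}_b$. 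The same stationary average also identifies $\bar{u}_b$ with the time-averaged throughput $\sum_{m\in\mathcal{M}_b,\,s\in\mathcal{S}}\bar{R}_{bm}^{(s)}$, since nesting the interference expectation in \eqref{Eq: BS's utility} inside the state-action average in \eqref{Eq: BS's stationary utility} reproduces $\lim_{T\to\infty}\tfrac1T\sum_{t=1}^{T}\mathbb{E}\big[R_{bm}^{(s)}(t)\big]$. Chaining these relations with the hypothesis $\bar{v}_b\ge\lambda_b$ then gives
\[
\sum_{m\in\mathcal{M}_b,\,s\in\mathcal{S}}\bar{R}_{bm}^{(s)}=\bar{u}_b\ge\bar{v}_b\ge\lambda_b,\qquad\forall\,b\in\mathcal{B},
\]
which is exactly \eqref{Eq: BS QoS}. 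Since the excerpt already notes, following \cite{Neely/Stochastic}, that \eqref{Eq: BS QoS} is equivalent to mean rate stability of each queue $Q_{bm}$, this closes the argument.

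The one delicate point I anticipate is the second stage, namely justifying that the stationary state-action average and the time average $\tfrac1T\sum_t$ defining $\bar{R}_{bm}^{(s)}$ genuinely coincide; this hinges on the strategy being \emph{mixed stationary and Markovian}, so that the per-slot expectations converge to their stationary values and the two notions of averaging agree. By contrast, the pointwise monotonicity of the first stage and the final chain of inequalities are entirely routine.
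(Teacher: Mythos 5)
Your proposal is correct and follows essentially the same route as the paper's own proof: establish the pointwise bound $v_b(\boldsymbol{\omega},\mathbf{P})\le u_b(\boldsymbol{\omega},\mathbf{P})$ (the paper's Eq.~\eqref{Eq: Jensen's inequality}, justified by the interference denominator being inflated to $[\mathcal{H}_{b'm}^{(s)}]_{\rm max}$), average against $\Pr(\boldsymbol{\omega})\Pr(\mathbf{P}\mid\boldsymbol{\omega})$, and chain with $\bar{v}_b\ge\lambda_b$ to recover \eqref{Eq: BS QoS} and hence mean rate stability. Your version simply spells out the term-by-term monotonicity and the nested-expectation identification of $\bar{u}_b$ with $\sum_{m,s}\bar{R}_{bm}^{(s)}$, which the paper leaves implicit.
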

\begin{proof}
Please refer to Appendix \ref{Lem: mean rate stability}.
\end{proof}
\begin{proposition}\label{Prop: Epsilon-CCE}
The CCE strategy with respect to $v_{b}(\boldsymbol{\omega},\mathbf{P})$ is an $\epsilon$-CCE with respect to $u_{b}(\boldsymbol{\omega},\mathbf{P})$.
\end{proposition}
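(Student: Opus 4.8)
The plan is to exploit that the auxiliary utility $v_b$ is built from the \emph{worst-case} interference and therefore pointwise lower-bounds the true utility $u_b$, while the two differ by at most a bounded amount; substituting this ordering and this bound into the CCE inequalities for $v_b$ then produces the $\epsilon$-CCE inequalities for $u_b$ after a uniform shift of the deviation-payoff variables $\theta_b$.

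First I would establish the pointwise ordering $v_b(\boldsymbol{\omega},\mathbf{P})\leq u_b(\boldsymbol{\omega},\mathbf{P})$ for every $\boldsymbol{\omega}\in\mathcal{W}$ and $\mathbf{P}\in\mathcal{P}$. Since $[\mathcal{H}_{b'm}^{(s)}]_{\rm max}\geq h_{b'm}^{(s)}$ for every realization of the interference channels, replacing each $h_{b'm}^{(s)}$ in the denominator of \eqref{Eq: DL instantaneous rate} by its maximal element only enlarges the aggregate interference and hence shrinks the per-stream rate. Thus the summand defining $v_b$ in \eqref{Eq: Remodeled utility} is dominated, realization by realization, by the corresponding summand of $\sum_{m\in\mathcal{M}_b,s\in\mathcal{S}}R_{bm}^{(s)}$; taking $\mathbb{E}_{\boldsymbol{\iota}_b}[\cdot]$ (which leaves the deterministic $v_b$ unchanged) and comparing with \eqref{Eq: BS's utility} gives $v_b\leq u_b$. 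Averaging with the common weights $\Pr(\boldsymbol{\omega})\Pr(\mathbf{P}|\boldsymbol{\omega})$ then yields $\bar{v}_b\leq\bar{u}_b$ and $\bar{v}_b(\tilde{\boldsymbol{\omega}}_b,\tilde{\mathbf{P}}_b)\leq\bar{u}_b(\tilde{\boldsymbol{\omega}}_b,\tilde{\mathbf{P}}_b)$.

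Next I would quantify the gap. Because each $R_{bm}^{(s)}$ is bounded above by $R_{\rm max}$, each sub-carrier carries at most one active stream, and $v_b\geq 0$, the nonnegative difference $u_b-v_b$ is uniformly bounded; write $\Delta_b:=\sup_{\boldsymbol{\omega}\in\mathcal{W},\mathbf{P}\in\mathcal{P}}\big(u_b(\boldsymbol{\omega},\mathbf{P})-v_b(\boldsymbol{\omega},\mathbf{P})\big)\leq|\mathcal{S}|R_{\rm max}$. Averaging $u_b\leq v_b+\Delta_b$ over the states with $\boldsymbol{\omega}_b=\tilde{\boldsymbol{\omega}}_b$ under the weights $\Pr(\boldsymbol{\omega})\Pr(\mathbf{P}|\boldsymbol{\omega})$, and using $\sum_{\mathbf{P}}\Pr(\mathbf{P}|\boldsymbol{\omega})=1$ together with $\sum_{\boldsymbol{\omega}:\boldsymbol{\omega}_b=\tilde{\boldsymbol{\omega}}_b}\Pr(\boldsymbol{\omega})=\Pr(\tilde{\boldsymbol{\omega}}_b)$, gives the conditional-utility estimate $\bar{u}_b(\tilde{\boldsymbol{\omega}}_b,\tilde{\mathbf{P}}_b)\leq\bar{v}_b(\tilde{\boldsymbol{\omega}}_b,\tilde{\mathbf{P}}_b)+\Delta_b\Pr(\tilde{\boldsymbol{\omega}}_b)$.

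Finally I would transfer the equilibrium inequalities. Let $\theta_b^v$ be the deviation-payoff variables certifying that $\Pr(\mathbf{P}|\boldsymbol{\omega})$ is a ($0$-)CCE for $v_b$, i.e. \eqref{Eq: CCE-1} and \eqref{Eq: CCE-2} hold with $v$ in place of $u$ and $\epsilon=0$. Setting $\theta_b^u(\boldsymbol{\omega}_b):=\theta_b^v(\boldsymbol{\omega}_b)+\Delta_b$, the conditional estimate and the $v$-version of \eqref{Eq: CCE-1} give $\bar{u}_b(\tilde{\boldsymbol{\omega}}_b,\tilde{\mathbf{P}}_b)\leq\Pr(\tilde{\boldsymbol{\omega}}_b)\theta_b^v(\tilde{\boldsymbol{\omega}}_b)+\Delta_b\Pr(\tilde{\boldsymbol{\omega}}_b)=\Pr(\tilde{\boldsymbol{\omega}}_b)\theta_b^u(\tilde{\boldsymbol{\omega}}_b)$, which is exactly \eqref{Eq: CCE-1} for $u$; and using $\bar{u}_b\geq\bar{v}_b$, the $v$-version of \eqref{Eq: CCE-2}, and $\sum_{\boldsymbol{\omega}_b}\Pr(\boldsymbol{\omega}_b)=1$ gives $\bar{u}_b\geq\bar{v}_b\geq\sum_{\boldsymbol{\omega}_b}\Pr(\boldsymbol{\omega}_b)\theta_b^v(\boldsymbol{\omega}_b)=\sum_{\boldsymbol{\omega}_b}\Pr(\boldsymbol{\omega}_b)\theta_b^u(\boldsymbol{\omega}_b)-\Delta_b$, which is \eqref{Eq: CCE-2} for $u$ with $\epsilon=\max_{b\in\mathcal{B}}\Delta_b\geq0$.

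The main obstacle is getting the direction of the comparison right and routing the gap consistently: one must verify that the worst-case substitution \emph{lower}-bounds $u_b$ rather than upper-bounds it, and, since the first CCE inequality \eqref{Eq: CCE-1} carries no slack, the gap $\Delta_b$ cannot be absorbed there but has to be pushed entirely into $\theta_b^u$ and thence into the single slack parameter $\epsilon$ of \eqref{Eq: CCE-2}. The remaining manipulations are bookkeeping with the marginalization identities for $\Pr(\boldsymbol{\omega})\Pr(\mathbf{P}|\boldsymbol{\omega})$.
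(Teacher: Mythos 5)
Your proof is correct and follows essentially the same route as the paper's: establish the pointwise ordering $v_b\leq u_b$, shift the deviation-payoff variables $\theta_b$ upward by the (nonnegative) gap so that the no-slack inequality \eqref{Eq: CCE-1} survives, and push the accumulated shift into the slack parameter $\epsilon$ of \eqref{Eq: CCE-2}. The only difference is that you bound the gap uniformly by $\Delta_b=\sup(u_b-v_b)$, yielding $\epsilon=\max_b\Delta_b$, whereas the paper defines a state-dependent excess $\epsilon_b(\tilde{\boldsymbol{\omega}}_b)$ and obtains the tighter $\epsilon=\max_b\sum_{\boldsymbol{\omega}_b}\Pr(\boldsymbol{\omega}_b)\epsilon_b(\boldsymbol{\omega}_b)$; since the proposition only asserts an $\epsilon$-CCE for some $\epsilon\geq 0$, both are valid.
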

\begin{proof}
Please refer to Appendix \ref{Lem: Epsilon-CCE}.
\end{proof}

Since $v_b$ is the lower bound of $u_b$ as in \eqref{Eq: Jensen's inequality}, Propositions  \ref{Prop: mean rate stability} and \ref{Prop: Epsilon-CCE} state that finding a CCE with respect to $v_b$ maximizes the lower bound of the optimal solution of {\bf OP}.
However, to solve {\bf OP}, each BS requires the knowledge of the other BSs' actions (i.e., $\mathbf{P}_{-b}$) and their impact on $v_b(\boldsymbol{\omega},\mathbf{P})$.
Since the MU has the ability to measure only the aggregate interference, estimating the impact of $\mathbf{P}_{-b}$ on $v_b(\boldsymbol{\omega},\mathbf{P})$  is also impractical. To address this, we assume that the mapping $v_{b}:\mathcal{W}\times\mathcal{P}\to\mathbb{R}^{+}\cup \{0\},\forall\,b\in\mathcal{B}$,  is known at the controller in advance.
Additionally, to ensure $\bar{v}_b\geq \lambda_{b},\forall\,b\in\mathcal{B}$, and the CCE, the controller needs $\Pr(\boldsymbol{\omega})$ and all $\lambda_{b}$ which in practice are unknown  even at the BSs.
Nevertheless, as time evolves, all the statistics can be empirically estimated by BSs.
For notational simplicity, we denote the estimated state distribution and mean data arrival as  $\hat{\Pr}(\boldsymbol{\omega})$ and $\hat{\lambda}_{b}$
which are the uploaded information sent via fronthaul.
Thus, the controller solves the modified problem of {\bf OP} given by,
\\
{\bf RP:} \vspace{-10pt}
$$\begin{array}{cll}
\underset{\hat{\Pr}_v(\mathbf{P}|\boldsymbol{\omega}),\hat{\theta}_b( \boldsymbol{\omega}_b)}{\text{maximize}} & \textstyle\sum_{b}\hat{\lambda}_b\ln\big(1+\hat{v}_b\big)
\\ \mbox{subject to}& \hat{v}_b\geq\hat{\lambda}_b, &\forall\,b,
\\&\bar{v}_{b}(\tilde{\boldsymbol{\omega}}_b, \tilde{\mathbf{P}}_b)
\leq \textstyle\hat{\Pr}(\tilde{\boldsymbol{\omega}}_b)\hat{\theta}_b( \tilde{\boldsymbol{\omega}}_b) , &\forall\,b,\tilde{\boldsymbol{\omega}}_b, \tilde{\mathbf{P}}_b,
\\& \hat{v}_b\geq\sum_{\boldsymbol{\omega}_b} \hat{\Pr}(\boldsymbol{\omega}_b)\hat{\theta}_b( \boldsymbol{\omega}_b), &\forall\,b,
%
\\& \textstyle\sum_{\mathbf{P}} \hat{\Pr}_v(\mathbf{P}|  \boldsymbol{\omega})=1, &\forall\,\boldsymbol{\omega},
\\& \textstyle\hat{\Pr}_v(\mathbf{P}|  \boldsymbol{\omega})\geq 0, &\forall\,\boldsymbol{\omega}, \mathbf{P},
\end{array}$$
%
where $\hat{\Pr}_v(\mathbf{P}|  \boldsymbol{\omega})$ is the strategy related to the auxiliary utility $v_b$ and the estimated statistics  {$\hat{\Pr}(\boldsymbol{\omega})=\hat{\Pr}(\tau)\prod_{b\in\mathcal{B},m\in\mathcal{M}_b,s\in\mathcal{S}}\hat{\Pr}(h_{bm}^{(s)})$}. Accordingly, we have {$\hat{v}_b=\sum_{\boldsymbol{\omega}\in\mathcal{W},\mathbf{P}\in\mathcal{P}}  \hat{\Pr}(\boldsymbol{\omega})\hat{\Pr}_v(\mathbf{P}|  \boldsymbol{\omega})v_{b}(\boldsymbol{\omega},\mathbf{P})$}.

Due to the fact that the objective and all the constraints in {\bf RP} are concave and affine functions, respectively, {\bf RP} is a convex optimization problem.
Owing to the fact that  analytically solving {\bf RP}
  is not tractable, we resort to CVX, a package for specifying and solving convex programs \cite{cvx}, by numerically finding the optimal strategy  $\hat{\Pr}^{*}_v(\mathbf{P}|\boldsymbol{\omega})$ of {\bf RP}.
Based on the strategy $\hat{\Pr}^{*}_v(\mathbf{P}|\boldsymbol{\omega})$,  for each global state $\boldsymbol{\omega}\in\mathcal{W}$, the controller generates $T_0$ global action realizations  $\check{\mathbf{P}}((a-1)T_0+1),\cdots,\check{\mathbf{P}}(aT_0)$. From all global states and the corresponding global action realizations $\check{\mathbf{P}}(t)$ for time slot $t$, there exists a mapping rule $\Psi^{*}(t):\mathcal{W}\to\mathcal{P}$ describing the above state-to-action relation, i.e., $\Psi^{*}(\boldsymbol{\omega};t)=\check{\mathbf{P}}(t)$.
Afterwards, the controller feeds $\{\Psi^{*}(t),t\in\mathcal{F}(a)\}$ back to BSs.

With the knowledge of $\Psi^{*}(t)$ and the observed $\boldsymbol{\omega}_b(t)$, BS $b$ obtains the suggested transmit action $\check{\mathbf{P}}_b(t)$ \cite{Neely:13:Arxiv}. Although following the suggested $\check{\mathbf{P}}_b(t)$ assures data throughput higher than the mean data arrival, it may suggest to schedule MUs considering only the channel quality neglecting their queues. Therefore, this transmission suggestion may result in high latency. To remedy to this issue, instead of following the suggested action, BS $b$ optimizes the transmit power over the recommended set of sub-carriers  $\mathcal{X}_b(t)=\{s\in\mathcal{S}|\sum_{m\in\mathcal{M}_b}\check{P}_{bm}^{(s)}(t)> 0\}$ and schedules its MUs.
Thus, utilizing the available sub-carriers, BS $b$ maximizes the average DL rate by scheduling the MUs ensuring their queue stability in the remaining time slots of frame $a$. This is modeled as the following sub-problem:
%
$$\begin{array}{ccl}
{\mbox{\bf SP:}}&\underset{\mathbf{P}_b(t)\in\mathcal{P}_b}{\text{maximize}} & \textstyle\sum\limits_{m\in\mathcal{M},s\in\mathcal{S}}\bar{R}_{bm}^{(s)}
\\ &\mbox{subject to}& \textstyle\lim\limits_{t\to\infty}\frac{\mathbb{E}\left[|Q_{bm}(t)|\right]}{t}\to 0, ~\forall\,m\in\mathcal{M}_b,
\\&& P_{bm}^{(s)}(t)=0, ~\forall\,t,m\in\mathcal{M}_b,s\notin\mathcal{X}_b(t),
\end{array}$$
%
%
%
%
which  can be solved using dynamic programming. However, dynamic programming suffers from high computational complexity. To cope with this complexity issue, we resort to the tools in the Lyapunov optimization framework.
Let $\mathbf{\Xi}_b(t)\coloneqq\{Q_{bm}(t),m\in\mathcal{M}_b\}$ be the combined queue at BS $b$.  The conditional Lyapunov drift-plus-penalty for slot $t$ is given by,%
\begin{equation}\label{Eq: Conditional Lyapunov drift}
\textstyle\mathbb{E}\Big[L(\mathbf{\Xi}_b(t+1))-L(\mathbf{\Xi}_b(t))-\sum\limits_{m\in\mathcal{M}_b,  s\in\mathcal{S}}VR_{bm}^{(s)}(t)
\Big|\mathbf{\Xi}(t)\Big]
\end{equation}
with a parameter $V\geq 0$ which decides the trade-off between optimality and queue stability, and $L(\mathbf{\Xi}_b(t))=\frac{1}{2}\sum_{m\in\mathcal{M}_b}\big(Q_{bm}(t)\big)^2$ is the Lyapunov function.
Using $\max\{x,0\}\leq x^2$,  \eqref{Eq: Conditional Lyapunov drift} can be simplified as follows,
\begin{align}
&\eqref{Eq: Conditional Lyapunov drift}
%
%
%
%
\leq\textstyle \frac{|\mathcal{M}_b||\mathcal{S}|^2R^2_{{\rm max}}}{2}+\frac{|\mathcal{M}_b|A_{{\rm max}}^2}{2}+\mathbb{E}\bigg[\sum\limits_{m\in\mathcal{M}_b}Q_{bm}(t) \notag
\\&~\textstyle\times \Big(A_{bm}(t)-\sum\limits_{s\in\mathcal{S}}R_{bm}^{(s)}(t)\Big)-\sum\limits_{m\in\mathcal{M}_b\atop s\in\mathcal{S}}VR_{bm}^{(s)}(t)
\Big|\mathbf{\Xi}(t)\bigg].\label{Eq: Lyapunov bound}
\end{align}
The solution to {\bf SP} can be found by minimizing the upper bound in \eqref{Eq: Lyapunov bound} in every time slot \cite{Neely/Stochastic}. Thus, relaxing the power constraint in {\bf SP}, the MU scheduling problem during time slot $t$ at BS $b$ becomes,
%
%
%
$$\begin{array}{ccl}
{\mbox{\bf BP:}}&\underset{P_{bm}^{(s)}}{\text{maximize}} & \sum\limits_{m\in\mathcal{M}_b,s\in\mathcal{S}}(Q_{bm}+V)
\\ && \quad \times\mathbb{E}_{I_{bm}^{(s)}}  \bigg[\ln\bigg(1+\frac{P_{bm}^{(s)}h_{bm}^{(s)}}{N_0+ I_{bm}^{(s)}} \bigg)\bigg|\boldsymbol{\omega}_b,\mathcal{X}_b\bigg]
\\& \mbox{subject to}& \sum\limits_{m\in\mathcal{M}_b,s\in\mathcal{S}}P_{bm}^{(s)} \leq |\mathcal{S}|\totalpower{b},
\\&&P_{bm}^{(s)}  \geq 0, ~ \forall\,m\in\mathcal{M}_b,s\in\mathcal{X}_b,
\\&&P_{bm}^{(s)}  = 0, ~ \forall\,m\in\mathcal{M}_b,s\notin\mathcal{X}_b.
\end{array}$$
%
%
%
Here, the expectation is over the conditional distribution  of the estimated aggregate interference in slot $t$, denoted by $\hat{\Pr}\big(I_{bm}^{(s)};t|\boldsymbol{\omega}_b,\mathcal{X}_b\big)$.
For simplicity, the time index $t$ is omitted in the interference term $I_{bm}^{(s)}=\sum_{b'\in\mathcal{B}\backslash b,m'\in\mathcal{M}_{b'}}P_{b'm'}^{(s)}  h_{b'm}^{(s)}$.
%
%
%
Applying the KKT conditions,
the feasible power allocation $P_{bm}^{(s)*}$, $\forall\,s\in\mathcal{X}_b$, is found by
\begin{equation}
\mathbb{E}_{I_{bm}^{(s)}}\bigg[\frac{\big(Q_{bm}+V\big)h_{bm}^{(s)}}{N_0+ I_{bm}^{(s)}+P_{bm}^{(s)*}h_{bm}^{(s)}} \bigg|\boldsymbol{\omega}_b,\mathcal{X}_b\bigg] =\gamma_b\label{Eq: BS water-willing}
\end{equation}
%
%
%
%
%
if $\mathbb{E}_{I_{bm}^{(s)}}\Big[\frac{(Q_{bm}+V)h_{bm}^{(s)}}{N_0+ I_{bm}^{(s)}}\Big|\boldsymbol{\omega}_b,\mathcal{X}_b \Big]>\gamma_b$, and
$P_{bm}^{(s)*}=0$ otherwise.
Here, $\gamma_b$ is chosen to satisfy
$\sum_{m\in\mathcal{M}_b,s\in\mathcal{S}}P_{bm}^{(s)*} =|\mathcal{S}|\totalpower{b}$.
Finally, we search for the nearest (with respect to the Euclidean distance) $\mathbf{P}_b^{*}$ in $\mathcal{P}_b $ as the transmit power.
\begin{remark}\label{Remark: V impact}
From \eqref{Eq: BS water-willing}, it can be observed that for small $V$, BS $b$ has high priority to serve  MU $m$ with a large queue $Q_{bm}$.
Analogously, for large $V$, the BS allocates higher power to the MUs with better links.
As  such, the BS maximizes its DL rate while allowing the queues of the MUs with worse links to grow.
\end{remark}
By the end of DL transmission in time slot $t$,  MUs send feedback of the aggregate interference to the BS. Then, the BS updates $Q_{bm}(t+1)$ according to \eqref{Eq: Queue-Q} and the estimated interference statistics  as follows:
\begin{multline}\label{Eq: Empirical interference distribution-1}
\textstyle \hat{\Pr}\big(\tilde{I}_{bm}^{(s)};t+1|\tilde{\boldsymbol{\omega}}_b,\tilde{\mathcal{X}}_b\big)
=
 \frac{\sum_{\xi=1}^{t}\mathbb{I}\{\boldsymbol{\omega}_b(\xi)=\tilde{\boldsymbol{\omega}}_b,\mathcal{X}_b(\xi)=\tilde{\mathcal{X}}_b\}}{1+\sum_{\xi=1}^{t}\mathbb{I}\{\boldsymbol{\omega}_b(\xi)=\tilde{\boldsymbol{\omega}}_b,\mathcal{X}_b(\xi)=\tilde{\mathcal{X}}_b\}}
 \\\textstyle\times\hat{\Pr}\big(\tilde{I}_{bm}^{(s)};t|\tilde{\boldsymbol{\omega}}_b,\tilde{\mathcal{X}}_b\big)
+\frac{\mathbb{I}\big\{I_{bm}^{(s)}(t)=\tilde{I}_{bm}^{(s)}\big\} }{1+\sum_{\xi=1}^{t}\mathbb{I}\{\boldsymbol{\omega}_b(\xi)=\tilde{\boldsymbol{\omega}}_b,\mathcal{X}_b(\xi)=\tilde{\mathcal{X}}_b\}}
\end{multline}
if $\boldsymbol{\omega}_b(t)=\tilde{\boldsymbol{\omega}}_b$ and $\mathcal{X}_b(t)=\tilde{\mathcal{X}}_b$.
 Otherwise,
\begin{align}\label{Eq: Empirical interference distribution-2}
\textstyle \hat{\Pr}\big(\tilde{I}_{bm}^{(s)};t+1|\tilde{\boldsymbol{\omega}}_b,\tilde{\mathcal{X}}_b\big)
= \hat{\Pr}\big(\tilde{I}_{bm}^{(s)};t|\tilde{\boldsymbol{\omega}}_b,\tilde{\mathcal{X}}_b\big).
\end{align}
Furthermore, the BS empirically updates the estimated state distributions of ${h}_{bm}^{(s)}$ and $\tau$, and the mean arrival ${\lambda}_{b}$ for the purpose of upload to the controller during frame $(a+1)$.
%
%
%

The overheads in fronthaul depend on the amount of information which needs to be uploaded.
Based on the cardinalities of sets of channels and waiting time along the mean arrivals, each BS uploads $\sum_{m\in\mathcal{M}_b,s\in\mathcal{S}}|\mathcal{H}_{bm}^{(s)}|+|\mathcal{T}|+1$ amount of statistical values via the fronthaul.
%
Thus, $R^{\rm U}$ in \eqref{Eq: FH time cost} becomes  $(\sum_{m\in\mathcal{M}_b,s\in\mathcal{S}}|\mathcal{H}_{bm}^{(s)}|+|\mathcal{T}|+1)R_{\rm unit}$, where $R_{\rm unit}$ is the required transmission rate to upload a single statistical value.
Moreover, assuming that all possible mapping rules $\Psi$ are available at both the controller and BSs, the controller  sends $T_0$ real values representing the mappings $\{\Psi^{*}(t),t\in\mathcal{F}(a)\}$. Thus, $R^{\rm D}$ in \eqref{Eq: FH download cost} becomes $T_0R_{\rm unit}$.
\begin{figure}[!t]
\vspace{1em}
\centering
	{\def\svgwidth{\columnwidth}
		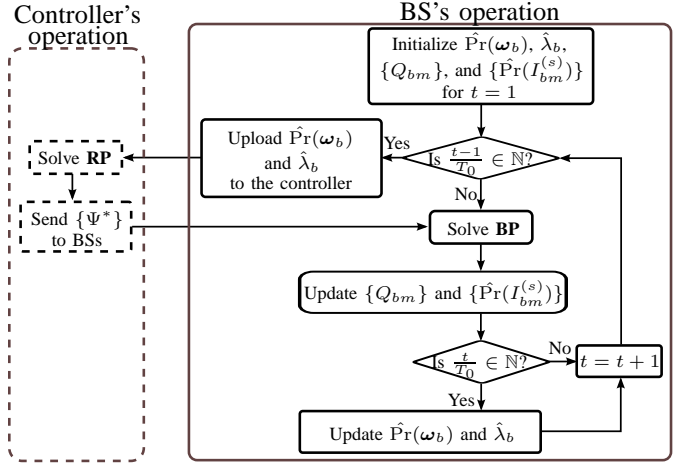}
			\vspace{-2em}
	\caption{Information flow digram of the two-timescale software-defined control approach.}
	\label{Fig: Flow diagram} 	
	\vspace{-2em}
\end{figure}
\begin{remark}
The proposed  software-defined approach operates in two timescales. In the long timescale,  BS $b$ uploads $\hat{\Pr}(\boldsymbol{\omega}_b)$ and $\hat{\lambda}_b$ at each time frame. Based on the uploaded information, the controller finds the set of optimal CCE mapping rules $\{\Psi^{*}(t)\}$ and feeds them back to the BSs. In the short timescale, i.e., at every time slot, BS $b$ schedules its MUs by solving {\bf BP}.
\end{remark}
The information flow diagram of the  two-timescale software-defined control approach is shown in Fig.~\ref{Fig: Flow diagram}.

\section{Numerical Results}\label{Sec: Numerical Results}
%
%
%
%
%
%
In this section, we validate the proposed solution for a network composed of  two small cell BSs and two MUs per cell.
For notation clarity, we refer to the BS and the two MUs in the first cell as BS 1, MU 1, and MU 2 while for the second cell, BS 2, MU 3, and MU 4, respectively.
We consider an indoor communication environment with the path loss model $30\log d+20\log 2.4+46$, where $d$ in meters is the distance between the transmitter and receiver \cite{rpt:itu_indoor}.
At MUs 1 and 3, the distances to the associated and interfering BSs are 10\,m and 40\,m while at MUs 2 and 4, they are 20\,m and 30\,m, respectively.
Further,  all wireless channels experience Rayleigh fading with unit variance, and channel gains are quantized into two levels.  The time cost $\tau$ is quantized into two levels with $\mathcal{T}=\{0.25,0.5\}$.
Additionally, we assume that the fronthaul SNR measured at the controller is 20\,dB. Coherence time and the bandwidth in each sub-channel  are 100\,ms and 10\,MHz, respectively.
We consider Poisson data arrival processes with  mean values $\lambda_{11}=\lambda_{12}=8$\,Mbps and $\lambda_{23}=\lambda_{24}=5$\,Mbps.
Moreover, $\totalpower 1=\totalpower 2=20$\,dBm, $\totalpower {\rm C}=25$\,dBm, $N_0=-85$\,dBm, $|\mathcal{S}|=2$, $T_0=10$, and $R_{\rm unit}=0.25\log_21.05$\,(bit/s/Hz). For performance comparison, we consider a non-SDN scheme without the central controller, and BSs do not communicate with each other. In this baseline, the BS solves {\bf BP} without the available sub-carrier  constraint, i.e., $\mathcal{X}_b(t)=\mathcal{S},\forall\,t,b\in\mathcal{B}$, and $\tau=0$.
%
%
%
%

The impact of the trade-off parameter $V$ on the BS's average DL rate and queue length for the proposed and baseline schemes is illustrated in Fig.~\ref{Fig: 2} and  Fig.~\ref{Fig: 3}, respectively. Note that the average queue length is proportional to the average waiting time of a packet in the queue (i.e., $\bar{Q}/\lambda$) and thus, represents the average delay/latency. For small $V $, Lyapunov optimization aims at minimizing the average queues while for large $V$, the focus is on maximizing the average rates. Thus, the lowest queues can be observed at $V=0$  while the highest rates can be seen at $V=100$ for both SDN and non-SDN schemes. Furthermore, Fig.~\ref{Fig: 2} illustrates that the proposed approach is aware of the traffic demands in which higher rates are ensured for BS 1 over BS 2 while the non-SDN scheme provides equal rates for both BSs. According to Fig.~\ref{Fig: 3}, it can be noted that the SDN scheme maintains a lower queue length for the BS with high demands (BS 1) over the non-SDN scheme, i.e. a significant delay reduction for high traffic demands is ensured. Although there is no improvement in the latency of BS 2, the proposed scheme provides higher rates compared to the non-SDN scheme.

%
%

%
%


\begin{figure}[!t]
	\centering
	\vspace{-1.3em}
	\includegraphics[width=0.9\columnwidth]{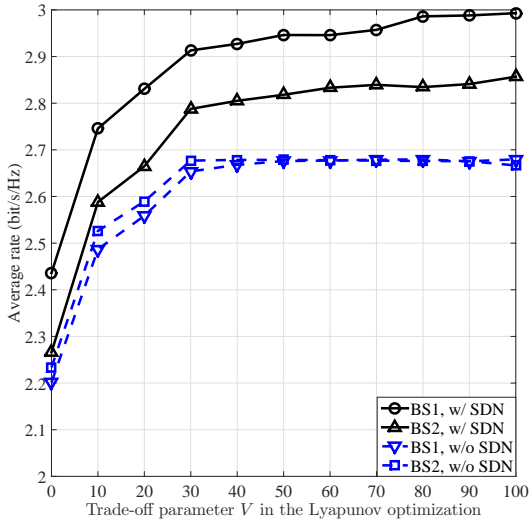}	
	\vspace{-1.5em}
	\caption{BS's average DL rate as $V$ varies.}
\vspace{-1.5em}
	\label{Fig: 2}
\end{figure}
\begin{figure}[!t]
	\centering
	\vspace{-1.3em}
	\includegraphics[width=0.9\columnwidth]{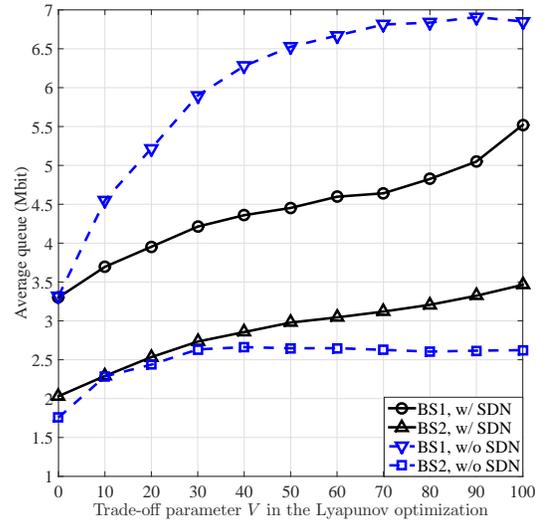}	
	\vspace{-1.5em}
	\caption{BS's average queue  as $V$ varies.}
	\label{Fig: 3}
\end{figure}
%
%
%
%
%
%
%
\begin{figure}[!h]
	\centering
	\vspace{-1.3em}
	\includegraphics[width=0.9\columnwidth]{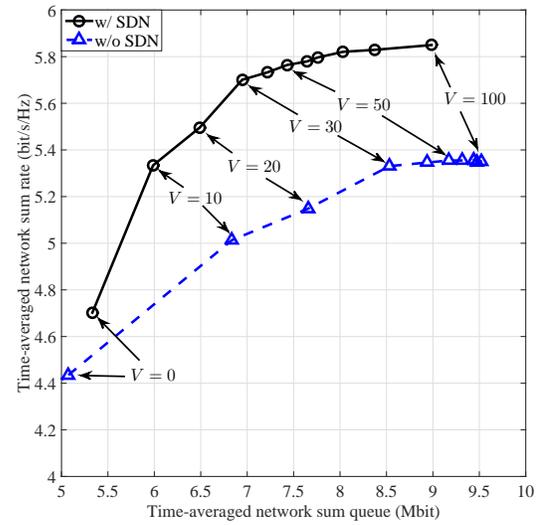}
	\vspace{-1.5em}
\caption{Trade-off between the time-averaged sum rate and queue size.}
	\label{Fig: R_Q_tradeoff}
		\vspace{-1.5em}
\end{figure}
%
%
%
The trade-off between the time-averaged network sum rate and the aggregate queue is illustrated in Fig.~\ref{Fig: R_Q_tradeoff}. It can be noted that for a given target rate, the proposed SDN approach yields lower latency compared to the non-SDN scheme. Moreover, Fig.~\ref{Fig: R_Q_tradeoff} shows that increasing $V$ improves the rates with a price of the latency which grows as $V$ increases. Therefore, $V=50$ can be selected as the preferred trade-off parameter due to the fact that the cost of latency dominates the gains in rates for values of $V>50$, for both schemes. For the choice of $V=50$, the proposed SDN scheme yields about 8\% gains in rates and 19\% reductions in latency over the non-SDN scheme.

%
%
%

%
%

Finally, the impact of the fronthaul SNR on the throughput and latency on the proposed two-timescale SDN approach is shown in Fig.~\ref{Fig: SNR_Q}. Note that the non-SDN scheme has no impact from the reliability of the fronthaul.
For a low-quality fronthaul, the round trip time for the controller's recommendations, $\tau$, is larger than $[\mathcal{T}]_{\max}$, i.e., the maximum element in $\mathcal{T}$. Therefore, BSs do not receive any recommendation even after waiting $[\mathcal{T}]_{\max}$  time slots and thus, cannot utilize their DL transmission resulting lower rates and higher delay compared to the non-SDN scheme. As the fronthaul becomes reliable, BSs obtain the recommendations with $\tau\leq [\mathcal{T}]_{\max}$. Thus, a significant improvement in throughputs can be observed. These gains in rates dominate the overhead in fronthaul and thus, improve the overall latency.
\begin{figure}[!t]
	\centering
	\vspace{-1.3em}
	\includegraphics[width=0.9\columnwidth]{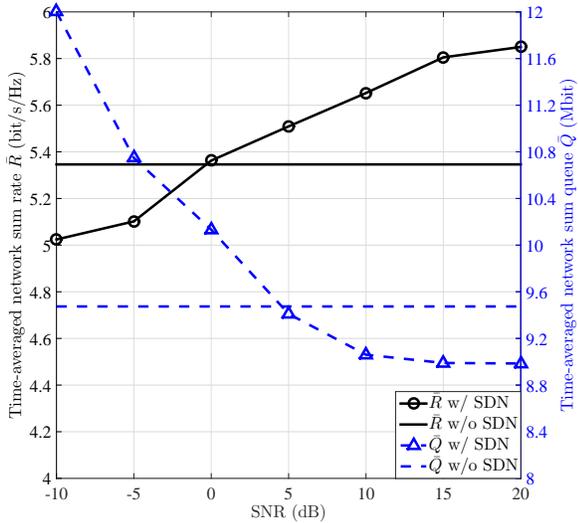}
	\vspace{-1.5em}
	\caption{Time-averaged sum rate and queue size as the fronthaul SNR varies, $V$ = 100.}
	\label{Fig: SNR_Q}
	\vspace{-1.5em}
\end{figure}

\section{Conclusions}\label{Sec: Conclusions}

In this paper, we propose a software-defined control mechanism for wireless networks with an in-band fronthaul. Here, BSs compete over the wireless resources in both the DL and fronthaul to maximize their utilities in terms of the average DL rates under the uncertainties of queues and channel states. Thus, the utility maximization problem is cast as a dynamic stochastic game among BSs where a central controller with a global view coordinates BSs via the fronthaul. Leveraging the timescale separation among the central controller and BSs, the controller ensures a CCE in the network by providing recommendations for BSs in the long timescale while BSs serve MUs in the short timescale. The tools of  Lyapunov optimization are invoked to provide a low-complexity traffic-aware user scheduling algorithm at each BS. Simulation results yield  throughput and latency enhancements over a non-SDN scheme. Furthermore, it is noted that the cost of overhead in fronthaul becomes negligible compared to the gains in throughput and  latency as the reliability of the fronthaul improves.

\section*{Acknowledgments}
This research was supported by TEKES grant 2364/31/2014 and the Academy of Finland  project CARMA.

\appendices
\section{Proof of Proposition \ref{Prop: mean rate stability}}
\label{Lem: mean rate stability}
As the aggregate interference in $v_b$ is modeled by the maximal interference channel gain $[\mathcal{H}_{b'm}^{(s)}]_{\rm max}$, $u_b$ is lower bounded by $v_b$, i.e.,
\begin{equation}\label{Eq: Jensen's inequality}
u_{b}(\boldsymbol{\omega},\mathbf{P})\geq v_{b}(\boldsymbol{\omega},\mathbf{P}),~\forall\,\boldsymbol{\omega}\in\mathcal{W},\mathbf{P}\in\mathcal{P}.
\end{equation}
Taking the expectation of \eqref{Eq: Jensen's inequality} with respect to $\Pr(  \boldsymbol{\omega})\Pr(\mathbf{P}|  \boldsymbol{\omega})$ and applying $\bar{v}_b\geq \lambda_{b}$,
we obtain
%
%
$ \sum_{m\in\mathcal{M}_b,s\in\mathcal{S}}\bar{R}_{bm}^{(s)}
%
%
\geq\lambda_{b}.$

%
%
%
\section{Proof of Proposition \ref{Prop: Epsilon-CCE}}
\label{Lem: Epsilon-CCE}
Let $\Pr_v(\mathbf{P}|  \boldsymbol{\omega})$ be the CCE with respect to $v_{b}$. Thus, from \eqref{Eq: CCE-1} and \eqref{Eq: CCE-2}, it satisfies,
\begin{align}
\bar{v}_{b}(\tilde{\boldsymbol{\omega}}_b,\tilde{\mathbf{P}}_{b}) &\leq \Pr(\tilde{\boldsymbol{\omega}}_b)\theta_b( \tilde{\boldsymbol{\omega}}_b), \label{Eq: Auxiliary CCE-1} \\
 \bar{v}_{b} &\geq \textstyle\sum_{\boldsymbol{\omega}_b} \Pr(\boldsymbol{\omega}_b)\theta_b( \boldsymbol{\omega}_b).\label{Eq: Auxiliary CCE-2}
\end{align}
Furthermore,  \eqref{Eq: Jensen's inequality} can be rewritten as follows:
\begin{equation}\label{Eq: Epsilon-CCE proof 1}
u_{b}(\boldsymbol{\omega},\mathbf{P})= v_{b}(\boldsymbol{\omega},\mathbf{P})+ \delta_{b}(\boldsymbol{\omega},\mathbf{P}),
\end{equation}
with $\delta_{b}(\boldsymbol{\omega},\mathbf{P})\geq 0$.
Applying \eqref{Eq: Epsilon-CCE proof 1} to \eqref{Eq: Auxiliary CCE-1}, it holds that,
\begin{align}
&\textstyle \sum\limits_{\boldsymbol{\omega}\in\mathcal{W}|\boldsymbol{\omega}_b=\tilde{\boldsymbol{\omega}}_{b}}
 \sum\limits_{\mathbf{P}\in\mathcal{P}}  \Pr(\boldsymbol{\omega})\Pr_v(\mathbf{P}|  \boldsymbol{\omega})u_{b}(\boldsymbol{\omega},\tilde{\mathbf{P}}_{b},\mathbf{P}_{-b}),\notag
%
%
%
\\&\qquad \leq\Pr(\tilde{\boldsymbol{\omega}}_b)\big(\theta_b( \tilde{\boldsymbol{\omega}}_b)+\epsilon_b( \tilde{\boldsymbol{\omega}}_b)\big)=\Pr(\tilde{\boldsymbol{\omega}}_b)\eta_b( \tilde{\boldsymbol{\omega}}_b),\label{Eq: Epsilon-CCE proof 2}
\end{align}
where
\begin{align*}
&\textstyle\epsilon_b( \tilde{\boldsymbol{\omega}}_b)=\max\Big\{0,\max\limits_{\tilde{\mathbf{P}}_b\in\mathcal{P}}\Big\{-\theta_b( \tilde{\boldsymbol{\omega}}_b)+\sum\limits_{\boldsymbol{\omega}\in\mathcal{W}|\boldsymbol{\omega}_b=\tilde{\boldsymbol{\omega}}_{b}}
 \sum\limits_{\mathbf{P}\in\mathcal{P}}
\\&\textstyle\quad\frac{\Pr(\boldsymbol{\omega}){\Pr}_v(\mathbf{P}|  \boldsymbol{\omega})}{{\Pr(\tilde{\boldsymbol{\omega}}_b)}} \big( v_b(\boldsymbol{\omega},\tilde{\mathbf{P}}_{b},\mathbf{P}_{-b})+\delta_b(\boldsymbol{\omega},\tilde{\mathbf{P}}_{b},\mathbf{P}_{-b})\big)\Big\}\Big\},
\end{align*}
and $\eta_b( \tilde{\boldsymbol{\omega}}_b) =\theta_b( \tilde{\boldsymbol{\omega}}_b)+\epsilon_b( \tilde{\boldsymbol{\omega}}_b)$ is an auxiliary variable.
By applying \eqref{Eq: Jensen's inequality}  and \eqref{Eq: Epsilon-CCE proof 2} to \eqref{Eq: Auxiliary CCE-2},
\begin{equation}
\bar{u}_{b}\textstyle\geq\sum_{\boldsymbol{\omega}_b\in\mathcal{W}_b}\Pr(\boldsymbol{\omega}_b)\eta_b( \boldsymbol{\omega}_b)-\epsilon,\label{Eq: Epsilon-CCE proof 3}
\end{equation}
where $\epsilon=\max_{b\in\mathcal{B}}\big\{\sum_{\boldsymbol{\omega}_b\in\mathcal{W}_b}\Pr(\boldsymbol{\omega}_b)\epsilon_b( \boldsymbol{\omega}_b)\big\}$. From  \eqref{Eq: Epsilon-CCE proof 2} and \eqref{Eq: Epsilon-CCE proof 3}, we note that $\Pr_v(\mathbf{P}|  \boldsymbol{\omega})$ is an $\epsilon$-CCE of $u_b(\boldsymbol{\omega},\mathbf{P})$.


\bibliographystyle{IEEEtran}
\bibliography{ref}

\end{document}